  \pgfplotsset{compat=newest}
  \pgfplotsset{plot coordinates/math parser=false}
  \newlength\figureheight
  \newlength\figurewidth
\newtheorem{theorem}{Theorem} 
\newtheorem{observation}{Observation}
\xpatchcmd{\proof}{\@addpunct{.}}{\normalfont\,\@addpunct{:}}{}{}
\newcommand\copyrighttext{%
  \footnotesize \textcopyright 2019 IEEE. Personal use of this material is permitted. Permission from IEEE must be obtained for all other uses, in any current or future media, including reprinting/republishing this material for advertising or promotional purposes, creating new collective works, for resale or redistribution to servers or lists, or reuse of any copyrighted component of this work in other works.
  DOI:} 
\newcommand\copyrightnotice{%
\begin{tikzpicture}[remember picture,overlay]
\node[anchor=south,yshift=8pt] at (current page.south) {\fbox{\parbox{\dimexpr\textwidth-\fboxsep-\fboxrule\relax}{\copyrighttext}}};
\end{tikzpicture}%
}
\DeclareMathAlphabet\mathbfcal{OMS}{cmsy}{b}{n}
\newcounter{subeqsave}
\newcommand{\savesubeqnumber}{\setcounter{subeqsave}{\value{equation}}%
\typeout{AAA\theequation.\theparentequation}}
\newcommand{\recallsubeqnumber}{%
  \setcounter{equation}{\value{subeqsave}}\stepcounter{equation}}
\begin{document}

\title{Low-Complexity OFDM Spectral Precoding}
\author{\IEEEauthorblockN{Shashi Kant$^{(\dagger,\ddagger)}$, Gabor Fodor$^{(\dagger,\ddagger)}$, Mats Bengtsson$^{(\ddagger)}$, Bo G\"oransson}$^{(\dagger)}$, and Carlo Fischione$^{(\ddagger)}$\\
\IEEEauthorblockA{$^{(\dagger)}$Ericsson AB, Kista 164 80, Sweden.}  \\
\IEEEauthorblockA{$^{(\ddagger)}$KTH Royal Institute of Technology, Stockholm 114 28, Sweden.}\\
\{shashi.v.kant, gabor.fodor, bo.goransson\}@ericsson.com, mats.bengtsson@ee.kth.se, carlofi@kth.se
\vspace{-5mm}
}

\maketitle
\copyrightnotice

\renewcommand\qedsymbol{$\blacksquare$}

\iftrue
\renewcommand{\vec}[1]{\ensuremath{\boldsymbol{#1}}}
\newcommand{\mat}[1]{\ensuremath{\boldsymbol{#1}}}
\newcommand{\herm}{{\rm H}}
\newcommand{\tran}{{\rm T}}
\newcommand{\trans}{{\rm T}}
\newcommand{\trace}{{\rm Tr}}
\newcommand{\diag}{{\rm diag}}
\newcommand{\rank}{{\rm rank}}
\newcommand{\EVM}{{\rm EVM}}
\newcommand{\SNR}{{\rm SNR}}
\newcommand{\SINR}{{\rm SINR}}
\newcommand{\expect}{\mathbb{E}}
\newcommand{\Cm}{\mathbb{C}}
\newcommand{\Rm}{\mathbb{R}}
\newcommand{\CN}{\mathcal{CN}}
\newcommand{\be}{\begin{equation}}
\newcommand{\ee}{\end{equation}}
\newcommand{\pdf}{\mathcal{P}}
\newcommand{\prox}{\rm{prox}}
\newcommand{\dom}{\rm{dom}}
\newcommand{\ie}{\textit{i.e.}}
\newcommand{\eg}{\textit{e.g.}}
\newcommand{\cf}{\textit{cf.}}

\newcommand{\NR}{\ensuremath{N_{\rm R}}}
\newcommand{\NT}{\ensuremath{N_{\rm T}}}
\newcommand{\NL}{\ensuremath{N_{\rm L}}}
\newcommand{\NCP}{\ensuremath{N_{\rm CP}}}

\newcommand{\A}{\ensuremath{\boldsymbol{A}}}
\renewcommand{\a}{\ensuremath{\boldsymbol{a}}}
\newcommand{\B}{\ensuremath{\boldsymbol{B}}}
\renewcommand{\b}{\ensuremath{\boldsymbol{b}}}
\newcommand{\C}{\ensuremath{\boldsymbol{C}}}
\renewcommand{\c}{\ensuremath{\boldsymbol{c}}}

\newcommand{\D}{\ensuremath{\boldsymbol{D}}}
\renewcommand{\d}{\ensuremath{\boldsymbol{d}}}
\newcommand{\E}{\ensuremath{\boldsymbol{E}}}
\newcommand{\F}{\ensuremath{\boldsymbol{F}}}

\newcommand{\K}{\ensuremath{\boldsymbol{K}}}

\newcommand{\M}{{\mathbf{M}}}
\newcommand{\I}{\ensuremath{\boldsymbol{I}}}

\newcommand{\R}{\ensuremath{\boldsymbol{R}}}

\renewcommand{\S}{\ensuremath{\boldsymbol{S}}}

\newcommand{\T}{\ensuremath{\boldsymbol{T}}}
\renewcommand{\t}{\ensuremath{\boldsymbol{t}}}

\newcommand{\U}{\ensuremath{\boldsymbol{U}}}
\renewcommand{\u}{\ensuremath{\boldsymbol{u}}}

\newcommand{\V}{\ensuremath{\boldsymbol{V}}}
\renewcommand{\v}{\ensuremath{\boldsymbol{v}}}

\newcommand{\X}{\ensuremath{\boldsymbol{X}}}

\newcommand{\Y}{\ensuremath{\boldsymbol{Y}}}

\newcommand{\0}{\ensuremath{\boldsymbol{0}}}

\DeclareRobustCommand{\BigOh}{%
  \text{\usefont{OMS}{cmsy}{m}{n}O}%
}

\fi

\newcommand{\sk}[1]{{\color{black} #1}} 
\newcommand{\comments}[1]{{\color{magenta} #1}}
\def\baselinestretch{1}
\def\gf#1{\textcolor{black}{#1}}
\newcommand{\mx}[1]{\mathbf{#1}}

\def\baselinestretch{1}

\begin{abstract} 
This paper proposes a new large-scale mask-compliant spectral precoder (LS-MSP) for orthogonal frequency division multiplexing systems.
In this paper, we first consider a previously proposed mask-compliant spectral precoding scheme that 
utilizes a generic convex optimization solver which suffers from high computational complexity, notably in large-scale systems. 
To mitigate the complexity of computing the LS-MSP, 
we propose a divide-and-conquer approach that breaks the original problem into 
smaller rank 1 quadratic-constraint problems and each small problem yields closed-form solution. 
Based on these solutions, we develop three specialized first-order low-complexity algorithms, based on 1) projection on convex sets and 2) the alternating direction method of multipliers. We also develop an algorithm that capitalizes on the closed-form solutions for the rank 1 quadratic constraints, which is referred to as 3) semi-analytical spectral precoding. Numerical results show that the proposed LS-MSP techniques outperform previously proposed techniques in terms of the computational burden while complying with the spectrum mask. The results also indicate that 3) typically needs 3 iterations to achieve similar results as 1) and 2) at the expense of a  slightly increased computational complexity.
\end{abstract}

\section{Introduction}
The requirement space of the fifth generation (5G) wireless networks new radio (NR) is driven by several use-cases, including  mobile broadband and internet-of-things applications. 
One of the challenges to cater to these services is handling the spectrum crunch. 
A vital technology enabler to deploy 5G services in the lower frequency bands
is dynamic spectrum sharing between 5G and other already deployed radio access technologies. 
An orthogonal frequency division multiplexing (OFDM) system offers flexibility and facilitates dynamic spectrum sharing. 
Indeed, 5G NR systems employ OFDM for both downlink and uplink transmissions like the downlink transmission of fourth-generation networks \cite{Dahlman5GNR2018}.

Although OFDM systems offer various favourable properties, 
such as robustness to multipath fading 
-- enabling low-complexity receivers and achieving high spectral efficiency -- 
it is well-known that OFDM waveform suffers from high out-of-band emissions (OOBE), \eg, see \cite{Dahlman5GNR2018}. 
The leading cause for this is the
discontinuities at the boundaries of the rectangular window of the generated OFDM symbols. 
Unfortunately, the high OOBE causes significant interference to the neighbouring channel occupants unless the OOBE is adequately suppressed. 
Therefore, NR systems are designed to comply with well-defined adjacent channel leakage ratio (ACLR) and spectral emission mask (SEM) requirements, see for example \cite{3GPPTS38.1042018NRReception}.

Spectral precoding is one of the promising techniques for OOBE reduction, \sk{by spectrally precoding the complex data symbols} prior to OFDM modulation without increasing the delay/time dispersion or penalizing the \sk{cyclic prefix} of the transmitted signal. 
In \cite{DeBeek2009SculptingPrecoder}, a notch-based spectral precoder (NSP) scheme is proposed by nulling the OOBE at predefined frequency points while minimizing the Euclidean distance between the precoded and the original data vector, \sk{ that is equivalent to an error-vector-magnitude (EVM). 
The NSP method renders high EVM, particularly at the edge subcarriers, leading to increased bit error rate (BER) / block error rate (BLER) at the receiver}. 
Recognizing these problems, in \cite{Kumar2015WeightedRadio} a weighted-NSP scheme is proposed to reduce the distortion power level at the edge subcarriers by spreading the distortion power over the allocated subcarriers. 
An interesting approach proposed in \cite{Tom2013MaskShaping} is the spectral precoding that only needs to fulfil the desired SEM, which is referred to as mask-compliant spectral precoder (MSP). 
Moreover, MSP offers improved performance in terms of BLER as compared with NSP depending on the target SEM at the expense of increased complexity. \\
The higher complexity of MSP is due to the convex optimization formulation that typically does not yield a closed-form solution. 
Therefore, the authors of \cite{Tom2013MaskShaping} suggest utilizing a generic optimization solver, which typically employs interior-point methods \cite{Boyd2004ConvexOptimization}. 
Instead of computing a spectrally precoded symbol vector in MSP \cite{Tom2013MaskShaping}, 
the authors in \cite{Kumar2016} proposed optimal and sub-optimal schemes to obtain a spectral precoder matrix that can fulfil the target SEM by multiplying the data symbol vector by a fixed precoding matrix, 
while offloading the major computational part offline. However, \gf{supporting large bandwidths along with a large number of antennas while enabling dynamic spectrum sharing, the memory required for the storage of large spectral precoder matrices is a bottleneck in real systems. Thus, the authors in \cite{Kumar2016} proposed a suboptimal approach. Therefore, in contrast to \cite{Kumar2016}, we not only seek low-complexity and memory-efficient but also (near) optimal precoding schemes for the large-scale MSP (LS-MSP) that fulfils the target mask, and minimizes the EVM.}\\
\sk{\textit{Contributions}: We decompose the LS-MSP optimization problem into subproblems, where each subproblem yields closed-form solution}. 
We also develop three specialized low-complexity algorithms that find 
the solution to the LS-MSP problem based on utilizing 
1) projection on convex sets (POCS) and 
2) consensus alternating direction method of multipliers (ADMM) techniques \cite{Combettes2011}\cite{Parikh2013}\cite{Boyd2010}. 
The third specialized algorithm, dubbed as 3) semi-analytical spectral precoding (SSP), is derived by utilizing the closed-form solution of a dual variable based on the coordinate descent scheme \cite{Luo1992}. \gf{We would also like to highlight, that the suboptimal approach in \cite{Kumar2016} has similar computational cost as in our SSP algorithm; but the  SSP converges in 2-3 iterations, faster than the suboptimal scheme in \cite{Kumar2016}, and yet SSP offers (near) optimal performance.}

\textit{Notation}:
The $i$-th element of a vector $\vec{a}$ is denoted by $a_i$, and element in the $i$-th row and $j$-th column of the matrix $\mat{A}$ is denoted by $\mat{A}\left[i,j\right]$. The $i$-th row vector of a matrix $\mat{A}$ is represented as $\mat{A}\left[i,:\right]$. The transpose and conjugate transpose of a vector or matrix are denoted by $\left(\cdot\right)^{\rm T}$ and $\left(\cdot\right)^{\herm}$, respectively. The complex conjugate to a vector $\vec{b}$ is denoted by $\vec{b}^*$. The set of complex and real numbers are denoted by $\mathbb{C}$ and $\mathbb{R}$, respectively; $\Re\{x\}$ shows the real part of a complex number $x$. An $i$-th iterative update is denoted by $(\cdot)^{(i)}$. 

\section{Preliminaries} \label{sec:prelim}
This section introduces the downlink system model followed by a brief description of the OOBE in OFDM and conventional spectral precoders.

\subsection{System Model and Out-of-Band Emissions} 
\label{sec:system_model}
We consider a single transmit and receive antenna based OFDM system as depicted in Fig. \ref{fig1__block_diag_tx_rx}. 
A given OFDM symbol vector in the frequency-domain comprises $N$ subcarriers denoted as 
$\vec{d}  = 
\left[x_0,\ldots, x_{N-1} \right]^{\tran} \! \in \! \mathbb{C}^{N \times 1}$, 
where $k$-th subcarrier is modulated by, 
\eg, quadrature amplitude modulation (QAM) alphabet, \sk{for instance,  see \cite{Dahlman5GNR2018}}.

\begin{figure*}[!htbp]
\begin{center}
\scalebox{0.8}{\includegraphics[trim=0 0 0 0, clip]{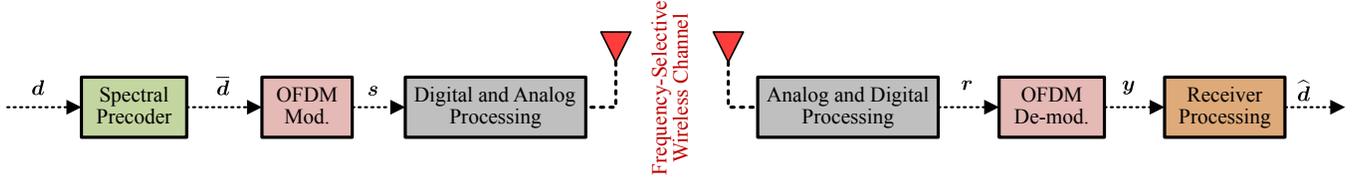}}
\end{center}
\caption{{Simplified block diagram of the OFDM transceiver with spectral precoding.}}
\vspace*{-2mm}
\label{fig1__block_diag_tx_rx}
\end{figure*}

The unwanted OOBE due to the OFDM frequency-domain signal 
$\vec{d}$ 
at the $M$ considered discrete frequency points 
$\vec{\nu}=\left[\nu_1,\ldots,\nu_{M}\right]$  
can be described by 
$\vec{p}(\vec{\nu})\, = \mat{A}\vec{d}.$ 

We now define 
$\mat{A}[m,:] \coloneqq \vec{a}\left(\nu_m\right)^\trans \in \Cm^{1 \times N }$, 
where $\mat{A}\left[m,k\right] \coloneqq a(\nu_m,k)$ 
can be derived in discrete form as \cite{DeBeek2009SculptingPrecoder}: 
\begin{align} 
\label{eqn:definition_of_a_nu_k_simplified_without_diric_function}
a(\nu_m,k) &=\left(\nicefrac{1}{\sqrt{N}} \right) \exp\left(j\pi\frac{\left(\nu_m-k\right)}{N} \left(N_{\rm CP}-N+1\right)\right) \nonumber \\ 
		  & \hspace{4mm} \cdot   \frac{\sin{\left(\pi\frac{\left(\nu_m-k\right)}{N}\left(N+N_{\rm CP}\right)\right)}}{\sin{\left(\pi\frac{\left(\nu_m-k\right)}{N}\right)}} ,
\end{align}
\sk{where $N_{\rm CP}$ corresponds to cyclic prefix length in samples.}

\vspace{-3mm}
\subsection{Spectral Precoding Methods---Prior Art}
In the sequel, we introduce the precoding methods 
proposed in \cite{DeBeek2009SculptingPrecoder} 
and  \cite{Tom2013MaskShaping}, 
which are referred to as NSP and MSP, respectively.

\subsubsection{Notch Based Spectral Precoding (NSP)}
In \cite{DeBeek2009SculptingPrecoder}, 
the constrained least-squares optimization problem for 
the memoryless notch based spectral precoder $\vec{G}_{\rm sp}$ reads
\begin{equation}
\begin{aligned}
& \underset{\overline{\vec{d}}}{\text{minimize}} 
& & \left\| \vec{d} - \overline{\vec{d}} \right\|_2^2 
& \text{subject to}
& & \vec{A}\overline{\vec{d}} = \vec{0}_{M\times1} \ ,\\
\end{aligned}
\end{equation}
such that the precoded OFDM symbol in frequency domain is given by 
$\overline{\vec{d}} = \vec{G}_{\rm sp} \vec{d}$,
where the spectral precoder is given by 
$\vec{G}_{\rm sp} = \vec{I}_N - \vec{A}^{\rm H}\left(\vec{A}\vec{A}^{\rm H}\right)^{-1}\vec{A}$.

\subsubsection{Mask Compliant Spectral Precoding (MSP)}
In \cite{Tom2013MaskShaping}, 
the work reported in \cite{DeBeek2009SculptingPrecoder} 
was extended such that the only SEM-constraint needs to be fulfilled, \ie,
\begin{equation} 
\label{eqn:original_msp_optimization_problem}
\begin{aligned}
& \underset{\overline{\vec{d}}}{\text{minimize}} 
& & \left\| \vec{d} - \overline{\vec{d}} \right\|_2^2 
& \text{subject to}
& & \left|\vec{A}\overline{\vec{d}}\right|^2 \leq \boldsymbol{\gamma} \ ,\\
\end{aligned}
\end{equation}
where the absolute value squared and inequality constraint is component-wise and $\boldsymbol{\gamma} \in \Rm^{M \times 1}$. 
The solution could not be expressed in the closed-form 
and thereby \cite{Tom2013MaskShaping} proposed to solve this MSP problem 
via a generic quadratic programming solver. 

\vspace{-2mm}
\section{Low-Complexity Algorithms for LS-MSP} \label{sec:general_spectral_precoding}

\sk{
In this section, we develop three first-order algorithms to solve the LS-MSP problem for catering to various hardware architectures in terms of support of computational complexity, memory efficiency, and latency. The proposed algorithms offer low complexity compared to second-order algorithms, \eg, interior-point based methods, for instance, see \cite{Beck2017}. 
In particular, the first two are based on the consensus ADMM and POCS, 
while the third one is derived 
by employing the coordinate descent scheme 
of a dual variable and capitalizing on the closed-form knowledge of the $\rank$ 1 constraint.

Based on our key observation, 
we rewrite the problem \eqref{eqn:original_msp_optimization_problem} as described below. 

\begin{observation} \label{lemma:msp_problem_with_rank1_constraints}
The constraint 
in \eqref{eqn:original_msp_optimization_problem} can be decomposed into 
$M$ $\rank$ 1 constraints such that the LS-MSP problem in \eqref{eqn:original_msp_optimization_problem} becomes
\begin{equation} \label{eqn:msp_problem_with_rank1_constraints}
\begin{aligned}
& \underset{\overline{\vec{d}}}{\text{minimize}} 
& & \left\| \vec{d} - \overline{\vec{d}} \right\|_2^2  \
& \text{s.t.}
& & \overline{\vec{d}}^\herm \; \overline{\mat{A}}_m \; \overline{\vec{d}} \leq \gamma_m \ \ \forall m , 
\end{aligned}
\end{equation}
where $\Cm^{N \times N} \ni \overline{\mat{A}}_m = \vec{a}\left(\nu_m\right)^* \vec{a}\left(\nu_m\right)^\trans$ and $\rank\{\overline{\mat{A}}_m \} \!=\! 1$.
\end{observation}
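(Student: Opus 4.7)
The plan is to show that the single vector inequality in \eqref{eqn:original_msp_optimization_problem} is exactly a collection of $M$ independent scalar quadratic inequalities, one for each row of $\vec{A}$, each of which can be written as a Hermitian form with a rank $1$ matrix. Because the inequality $|\vec{A}\overline{\vec{d}}|^2 \leq \boldsymbol{\gamma}$ is specified to be component-wise, it is equivalent to requiring $|[\vec{A}\overline{\vec{d}}]_m|^2 \leq \gamma_m$ for every $m = 1,\dots,M$; hence the decomposition into $M$ separate constraints is immediate and the only real work is to recast each scalar constraint in the stated Hermitian form.

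Next, I would expand the $m$-th row. Using the notation introduced in Section~\ref{sec:prelim}, $\vec{A}[m,:] = \vec{a}(\nu_m)^\trans$, so $[\vec{A}\overline{\vec{d}}]_m = \vec{a}(\nu_m)^\trans\overline{\vec{d}}$. Taking the modulus squared of this scalar gives
\begin{equation*}
\left|\vec{a}(\nu_m)^\trans\overline{\vec{d}}\right|^2
= \bigl(\vec{a}(\nu_m)^\trans\overline{\vec{d}}\bigr)^{\!*}\bigl(\vec{a}(\nu_m)^\trans\overline{\vec{d}}\bigr)
= \overline{\vec{d}}^\herm\bigl(\vec{a}(\nu_m)^*\vec{a}(\nu_m)^\trans\bigr)\overline{\vec{d}},
\end{equation*}
where the middle equality just uses that the conjugate of a scalar equals its conjugate transpose. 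Setting $\overline{\mat{A}}_m \coloneqq \vec{a}(\nu_m)^*\vec{a}(\nu_m)^\trans$ therefore yields exactly the constraint $\overline{\vec{d}}^\herm\overline{\mat{A}}_m\overline{\vec{d}} \leq \gamma_m$ stated in \eqref{eqn:msp_problem_with_rank1_constraints}.

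Finally, I would verify the rank claim. Since $\overline{\mat{A}}_m$ is an outer product $\vec{u}\vec{v}^\trans$ with $\vec{u} = \vec{a}(\nu_m)^*$ and $\vec{v} = \vec{a}(\nu_m)$, its column space is spanned by the single vector $\vec{a}(\nu_m)^*$, so $\rank\{\overline{\mat{A}}_m\}\leq 1$; equality follows because $\vec{a}(\nu_m)\neq\vec{0}$ (an inspection of \eqref{eqn:definition_of_a_nu_k_simplified_without_diric_function} shows that the $k = \lfloor\nu_m\rfloor$ entry is nonzero for any admissible $\nu_m$). This completes the decomposition. There is no real obstacle here beyond careful bookkeeping of conjugates versus transposes, since the data matrix $\vec{A}$ is complex and $[\vec{A}\overline{\vec{d}}]_m$ involves $\vec{a}(\nu_m)^\trans$ (not $\vec{a}(\nu_m)^\herm$); getting the placement of the complex conjugation right is what produces the precise definition $\overline{\mat{A}}_m = \vec{a}(\nu_m)^*\vec{a}(\nu_m)^\trans$ rather than $\vec{a}(\nu_m)\vec{a}(\nu_m)^\herm$.
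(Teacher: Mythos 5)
Your proof is correct and matches what the paper intends: the authors state this as an Observation with no proof precisely because the argument is the immediate one you give, namely reading the component-wise inequality row by row and writing $\left|\vec{a}(\nu_m)^\trans\overline{\vec{d}}\right|^2 = \overline{\vec{d}}^\herm\left(\vec{a}(\nu_m)^*\vec{a}(\nu_m)^\trans\right)\overline{\vec{d}}$, with the correct placement of conjugates. The only cosmetic quibble is your justification that $\vec{a}(\nu_m)\neq\vec{0}$ via the ``$k=\lfloor\nu_m\rfloor$ entry,'' which need not be an in-band index; it suffices to note that the Dirichlet-type kernel in \eqref{eqn:definition_of_a_nu_k_simplified_without_diric_function} does not vanish at all $k$ for the out-of-band frequencies of interest, without which the constraint would be vacuous anyway.
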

}

\vspace{-4mm}
\subsection{Efficient First-Order Algorithms for LS-MSP}

If $M=1$, the orthogonal projection onto the $\rank$ $1$ quadratic constraint is obtained in closed-form as described in the following \sk{theorem. The definitions of proximal operator $ {\prox}_{ \mathcal{X}_\mathcal{C}}\left( \vec{x} \right)$,  projection operator $\Pi_{\mathcal{C}} \left( \vec{x} \right)$, and the characteristic/indicator function $\mathcal{X}_\mathcal{C}\left( \vec{z} \right)$ are omitted and referred to, \eg, \cite{Parikh2013}\cite{Beck2017}, due to space constraint. 

\begin{theorem}[projection onto the $\rank$ 1 quadratic constraint] \label{theorem:proj_rank1_quadratic_constraint}
Let $\mathcal{C} \subseteq \Cm^{N \times 1}$ and $\mathcal{C} \neq \emptyset$ given by $\mathcal{C} = \left\{\vec{x} \in  \Cm^{N \times 1}: \vec{x}^\herm {\mathbfcal{A}} \vec{x} - b \leq  0 \right\}$, where $ \Cm^{N \times N} \ni {\mathbfcal{A}} = \vec{u} \vec{u}^\herm$ is $\rank$ 1 matrix and $b \in \Rm_+$, then
the proximal operator 
\begin{align} \label{eqn:projection_operator_rank1_quadratic_constraint}
    {\prox}_{\mathcal{X}_\mathcal{C}}\left( \vec{x} \right) = \Pi_{\mathcal{C}} \left( \vec{x} \right) = \vec{x} + \left( \frac{\sqrt{b} - \left| \vec{u}^\herm \vec{x} \right|}{\left\| \vec{u} \right\|_2^2 \left| \vec{u}^\herm \vec{x} \right|} \right) \vec{u} \left( \vec{u}^\herm \vec{x} \right) \ . 
\end{align}
\end{theorem}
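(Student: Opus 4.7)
The plan is to reduce the projection problem to a one-dimensional one by exploiting the rank-1 structure, then recover the closed form through the KKT conditions.

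First I would rewrite $\vec{x}^\herm \mathbfcal{A}\vec{x} = \vec{x}^\herm \vec{u}\vec{u}^\herm \vec{x} = |\vec{u}^\herm \vec{x}|^2$, so that $\mathcal{C}$ is precisely the preimage under the linear map $\vec{x}\mapsto \vec{u}^\herm \vec{x}$ of the closed disk of radius $\sqrt{b}$ in $\Cm$. In particular $\mathcal{C}$ is convex and closed, so the projection $\Pi_{\mathcal{C}}(\vec{x})$ is well defined and unique. If $|\vec{u}^\herm \vec{x}|\le \sqrt{b}$, the input is already feasible and $\Pi_{\mathcal{C}}(\vec{x})=\vec{x}$; the nontrivial case is $|\vec{u}^\herm \vec{x}|>\sqrt{b}$, in which the minimizer $\vec{y}^\star$ must satisfy the boundary equality $|\vec{u}^\herm \vec{y}^\star|=\sqrt{b}$.

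Next I would write the Lagrangian
\begin{equation*}
L(\vec{y},\lambda)=\|\vec{y}-\vec{x}\|_2^2+\lambda\bigl(\vec{y}^\herm \vec{u}\vec{u}^\herm \vec{y}-b\bigr),\qquad \lambda\ge 0,
\end{equation*}
take the Wirtinger derivative with respect to $\vec{y}^*$, and obtain the stationarity condition $(\I+\lambda\vec{u}\vec{u}^\herm)\vec{y}^\star=\vec{x}$. Applying the Sherman--Morrison identity to the rank-1 perturbation yields
\begin{equation*}
\vec{y}^\star=\vec{x}-\frac{\lambda}{1+\lambda\|\vec{u}\|_2^2}\,\vec{u}\bigl(\vec{u}^\herm \vec{x}\bigr).
\end{equation*}
Left-multiplying by $\vec{u}^\herm$ gives $\vec{u}^\herm \vec{y}^\star=\vec{u}^\herm \vec{x}/(1+\lambda\|\vec{u}\|_2^2)$, and enforcing the active constraint $|\vec{u}^\herm \vec{y}^\star|=\sqrt{b}$ determines $1+\lambda\|\vec{u}\|_2^2=|\vec{u}^\herm \vec{x}|/\sqrt{b}$, from which
\begin{equation*}
\frac{\lambda}{1+\lambda\|\vec{u}\|_2^2}=\frac{|\vec{u}^\herm \vec{x}|-\sqrt{b}}{\|\vec{u}\|_2^2\,|\vec{u}^\herm \vec{x}|}.
\end{equation*}
Substituting back into the Sherman--Morrison expression and factoring out the sign produces exactly the closed form \eqref{eqn:projection_operator_rank1_quadratic_constraint}.

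Finally I would check dual feasibility: since we are in the case $|\vec{u}^\herm \vec{x}|>\sqrt{b}$, the numerator $|\vec{u}^\herm \vec{x}|-\sqrt{b}$ is positive, hence $\lambda\ge 0$, and all KKT conditions hold; by convexity of the quadratic constraint together with Slater's condition (take any $\vec{z}$ with $\vec{u}^\herm \vec{z}=0$), the KKT point is the unique global minimizer. The main obstacle is not conceptual but bookkeeping: one has to handle the Wirtinger calculus carefully so that the rank-1 update and the complex phase of $\vec{u}^\herm \vec{x}$ combine cleanly, and one must remember that the displayed formula applies in the infeasible branch (in the feasible branch the parenthesized factor should be set to zero, so the two branches can be merged via a $\min\{0,\cdot\}$ if desired).
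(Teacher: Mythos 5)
Your derivation is correct. Note that the paper itself gives no proof at all---it defers to Section III-A of the cited consensus-ADMM reference \cite{Huang2016Consensus-ADMMProgramming}---and the argument used there for a rank~1 positive semidefinite quadratic constraint is exactly the one you reconstruct: rewrite the constraint as $\left|\vec{u}^\herm\vec{x}\right|\le\sqrt{b}$, impose stationarity $(\I+\lambda\vec{u}\vec{u}^\herm)\vec{y}=\vec{x}$, invert the rank-1 perturbation by Sherman--Morrison, and pin $\lambda$ with the active boundary condition; your computation of $\lambda/(1+\lambda\|\vec{u}\|_2^2)$ and the sign check for dual feasibility are both right and reproduce \eqref{eqn:projection_operator_rank1_quadratic_constraint} exactly. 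Your closing caveat is also worth keeping: as displayed, the formula is the projection only on the infeasible branch $\left|\vec{u}^\herm\vec{x}\right|>\sqrt{b}$ (and is undefined when $\vec{u}^\herm\vec{x}=0$), while on the feasible branch the projection is the identity, a distinction the theorem statement itself glosses over but which matters when the operator is used inside the POCS and ADMM iterations.
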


\begin{proof}
The proof is omitted due to space constraint, but we refer to Section III-A in  \cite{Huang2016Consensus-ADMMProgramming} for the projection result. 
\end{proof}
}

If $M > 1$, then no closed form is known yet. Hence, in the subsequent sections, we propose and elucidate low-complexity algorithms to solve the LS-MSP problem.  

\subsubsection{POCS Based LS-MSP Solution (POCS algorithm)}

One of the simplest and low-complexity first-order methods to obtain 
a solution to the problem in \eqref{eqn:msp_problem_with_rank1_constraints} 
is by finding a solution to the intersection of convex sets, 
where each convex set corresponds to the ${\rm rank}=1$ quadratic inequality, \ie,
\vspace{-0.5mm}
\begin{equation} 
\label{eqn:problem_formulation_of_msp__M_convex_sets}
\begin{aligned}
& \underset{\overline{\vec{d}}}{\text{minimize}} \!
& & \left\| \vec{d} - \overline{\vec{d}} \right\|_2^2  
& \text{s.t.}
& & \overline{\vec{d}} \in \mathcal{C}_1 \bigcap  \mathcal{C}_2 \bigcap \cdots \bigcap \mathcal{C}_{M} ,
\end{aligned}
\end{equation} 
where the convex set is given by 
$\mathcal{C}_m = \left\{\overline{\vec{d}} : g_m\left(\overline{\vec{d}}\right) \leq  0 \right\}$.

The POCS based solution of \eqref{eqn:problem_formulation_of_msp__M_convex_sets} reads \cite{Combettes2011}
\begin{align}
\overline{\vec{d}}^{\left(i\right)} = \Pi_{\mathcal{C}_{M}} \left( \cdots \Pi_{\mathcal{C}_{1}} \left( \overline{\vec{d}}^{\left(i-1\right)} \right) \right),
\end{align}
where $\overline{\vec{d}}^{\left(0\right)} = \vec{d}$ and $\Pi_{\mathcal{C}_{m}}(\cdot) = {\prox}_{\mathcal{X}_\mathcal{C}}(\cdot)$ is given in \eqref{eqn:projection_operator_rank1_quadratic_constraint}. \sk{POCS converges to a common point of an intersection of the closed convex sets (assuming intersection is non-empty), \eg, see \cite{Combettes2011}}.

\subsubsection{Consensus ADMM Based LS-MSP Solution (ADMM algorithm)}

In our second proposal, 
we utilize ADMM with consensus optimization to solve the LS-MSP problem.  
\begin{equation*}
\begin{aligned}
& \underset{\overline{\vec{d}}, \vec{y}_m \in \mathbb{C}^{N \times 1}}{\text{minimize}} 
& & f\left(\overline{\vec{d}}\right) + \sum_{m=1}^M \mathcal{X}_{\mathcal{C}_m}\left(\vec{y}_m\right) 
& \text{s.t.}
& & \vec{y}_m = \overline{\vec{d}} \ \ \forall m , 
\end{aligned}
\end{equation*}
where we define $f\left(\overline{\vec{d}}\right) \coloneqq \left\| \vec{d} - \overline{\vec{d}} \right\|_2^2$ and the characteristic function  $\mathcal{X}_{\mathcal{C}_m}\left(\vec{y}_m\right)$  with the $\rank$ 1 constraint set is given by $\mathcal{C}_m = \left\{\vec{y}_m : \vec{y}_m^\herm \; \overline{\mat{A}}_m \; \vec{y}_m - \gamma_m \leq  0 \right\}$, which implies that $\mathcal{X}_{\mathcal{C}_m}\left(\vec{y}_m\right)$ is zero if $\vec{y}_m \in \mathcal{C}_m$ otherwise $\infty$ when $\vec{y}_m \notin \mathcal{C}_m$. 

The scaled-form consensus ADMM algorithm for the above problem can be expressed as \cite{Boyd2010}
\begin{align}
    \label{eqn:scaled_consensus_admmm_step1}
    \overline{\vec{d}} &\leftarrow \arg \underset{\overline{\vec{d}} }{ \min } \  f\left( \overline{\vec{d}} \right) + \rho \sum_{m=1}^M \left\| \vec{y}_m - \overline{\vec{d}} + \vec{z}_m   \right\|_2^2 \\
    \label{eqn:scaled_consensus_admmm_step2}
    \vec{y}_m &\leftarrow \arg \underset{\vec{y}_m }{ \min } \ \mathcal{X}_{\mathcal{C}_m}\left(\vec{y}_m\right) + \rho \left\| \vec{y}_m - \overline{\vec{d}} + \vec{z}_m   \right\|_2^2 \ \forall m\\
    \label{eqn:scaled_consensus_admmm_step3}
    \vec{z}_m &\leftarrow  \vec{z}_m + \vec{y}_m - \overline{\vec{d}}  \ \ \forall m=1,\ldots,M \ .
\end{align}

\begin{algorithm} 
\caption{ADMM LS-MSP algorithm for OOBE reduction}
 \begin{algorithmic}[1] \label{alg:consensus_admm_msp} 
 	\renewcommand{\algorithmicrequire}{\textbf{Inputs:}}
 	\renewcommand{\algorithmicensure}{\textbf{Output:}}
 	\REQUIRE $\vec{d} \in \mathbb{C}^{N \times 1}$, $\left\{ \gamma_m; \ \vec{a}\left(\nu_m\right) \right\}_{m=1}^M$, and $\rho > 0 \in \mathbb{R}$
 	\ENSURE  $\overline{\vec{d}}^{\left(I\right)} \in \mathbb{C}^{N \times 1}$ 
 \\ \textit{Initialization}: $\vec{y}_m^{(0)} = \vec{0}_{N \times 1}$ and $\vec{z}_m^{(0)} = \vec{0}_{N \times 1}$ 
  \FOR {$i = 1,2,\ldots,I$ } 
  \begin{subequations} \label{eqn:admm_algo_ver1}
  \STATE 
  \begin{flalign} \label{eqn:psp_admm_algo_update_delta_d_ver1}
  	\overline{\vec{d}}^{\left(i\right)} \!=\! \frac{1}{\left(1 + \rho M\right)} \left[ \vec{d} + \rho \sum_{m=1}^M \left(\vec{y}_m^{\left(i-1\right)} + \vec{z}_m^{\left(i-1\right)} \right)  \right] 
  \end{flalign} 
  \STATE 
  \begin{flalign} \label{eqn:psp_admm_algo_update_delta_ym_ver1}
  	\vec{y}_m^{\left(i\right)} 
  	   &=  {\prox}_{\mathcal{X}_{\mathcal{C}_m}}\left(\overline{\vec{d}}^{\left(i\right)} - \vec{z}_m^{\left(i-1\right)} \right) \ \  \{\textrm{cf.} \: \eqref{eqn:projection_operator_rank1_quadratic_constraint}\} 
  \end{flalign} 
  \STATE \begin{flalign} \vec{z}_m^{\left(i\right)} 
  &= \vec{z}_m^{\left(i-1\right)} + \vec{y}_m^{\left(i\right)} - \overline{\vec{d}}^{\left(i\right)}  \end{flalign} 
  \end{subequations}
  \ENDFOR
\end{algorithmic} 
\end{algorithm}

In the first step of \sk{our proposed ADMM LS-MSP algorithm}, taking the derivative with respect to $\overline{\vec{d}}$ and setting to zero, \ie, 
\begin{align*}
0 &\in \frac{\partial}{\partial \overline{\vec{d}}^*} \left\{ \left\| \vec{d} - \overline{\vec{d}} \right\|_2^2 + \rho \sum_{m=1}^M \left\| \vec{y}_m - \overline{\vec{d}} + \vec{z}_m   \right\|_2^2 \right\} \\
\Rightarrow 0 = & -\left(\vec{d} - \overline{\vec{d}} \right) - \rho \sum_{m=1}^M \left(\vec{y}_m + \vec{z}_m - \overline{\vec{d}} \right)
\end{align*}
yields \eqref{eqn:psp_admm_algo_update_delta_d_ver1}. The second step is a projection operator onto the rank 1 quadratic constraint (cf. Theorem \ref{theorem:proj_rank1_quadratic_constraint}) yielding \eqref{eqn:psp_admm_algo_update_delta_ym_ver1}. Algorithm \ref{alg:consensus_admm_msp} summarizes the proposed recipe for the ADMM based MSP, where $I$ denotes the total number of iterations.

\subsubsection{Semi-Analytic LS-MSP Solution (SSP algorithm)}
In our third proposal, we derived an optimal semi-analytical algorithm, dubbed as SSP, based on the Karush-Kuhn-Tucker (KKT) conditions \cite{Boyd2004ConvexOptimization} for the constrained optimization \eqref{eqn:msp_problem_with_rank1_constraints}. 
\sk{We form a Lagrangian of \eqref{eqn:msp_problem_with_rank1_constraints} by introducing the Lagrange multipliers  $\{\mu_m\}$ as follows
\vspace{-3mm}
$$
L\left(\overline{\vec{d}}, \left\{\mu_m\right\}\right) 
= \left\| \vec{d} - \overline{\vec{d}} \right\|_2^2 + \sum \limits_{m=1}^{M} \mu_m \left(  \overline{\vec{d}}^{\herm} \overline{\mat{A}}_m  \overline{\vec{d}} - \gamma_m \right).$$ 
\vspace{-1mm}
Utilizing the KKT conditions, the stationarity condition yields \eqref{eqn:primal_solution_ssp__d_bar} and the Lagrange multipliers $\{ \mu_m \}$ are obtained iteratively in the coordinate descent set-up \cite{Luo1992} as outlined in Algorithm \ref{alg:solution_ssp_ver1}.
We omit the related proofs due to space constraint and defer to future work.}

\begin{algorithm} 
\caption{SSP LS-MSP algorithm for OOBE reduction}
 \begin{algorithmic}[1] \label{alg:solution_ssp_ver1} 
 	\renewcommand{\algorithmicrequire}{\textbf{Inputs:}}
 	\renewcommand{\algorithmicensure}{\textbf{Output:}}
 	\REQUIRE $\vec{d} \in \Cm^{N \times 1}$, $\left\{\gamma_m; \ \lambda_1^m = \|\vec{a}\left(\nu_m\right)\|_2^2\right\}_{m=1}^{M}$.
 	\ENSURE  $\left\{\overline{\vec{d}} \in \Cm^{N \times 1} \right\}$ 
 \\ \textit{Initialization}:\\  $\left\{\mu_m^{(0)} = \frac{1}{\lambda_1^m} \left(\left| \vec{a}\left(\nu_m\right)^{\tran} \vec{d} \right| \sqrt{\left( \frac{\lambda_1^m}{\gamma_m} \right)} - 1 \right)\right\}$ $\forall m$; $\phi = 0$.
  \FOR {$i = 1,2,\ldots,I$}
  \FOR {$m = 1,\ldots,M$}
  \begin{subequations} \label{eqn:ssp_algo_ver1}
  \STATE 
  \begin{flalign} 
    \label{eqn:ssp_Gm_inverse_sum_o_rank1}
    \mat{G}_{\backslash m}^{-1}  &= \left( \mat{I}_N + \sum \limits_{n=1 \backslash m}^{M} \mu_n^{(i-1)} \overline{\mat{A}}_n  \right)^{-1} \\
    \alpha_1 &= \vec{a}\left(\nu_m\right)^{\trans} \mat{G}_{\backslash m}^{-1} \ \vec{d} \\
  \alpha_2 &= \vec{a}\left(\nu_m\right)^{\trans}  \mat{G}_{\backslash m}^{-1} \ \vec{a}\left(\nu_m\right)^{*} 
  \end{flalign} 
  \STATE 
  \begin{flalign}   
   \mu_m^{(i)} &= \Re \left\{ \frac{\alpha_1 \exp\left(-\iota \phi\right) - \sqrt{\gamma_m}}{\sqrt{\gamma_m} \ \alpha_2}  \right\}
   \savesubeqnumber
  \end{flalign}    
  \end{subequations}
  \ENDFOR
  \ENDFOR
 \RETURN 
 \vspace{-10mm}
  \addtocounter{equation}{-1}
 \begin{subequations}
 \recallsubeqnumber
\begin{flalign} \label{eqn:primal_solution_ssp__d_bar} \overline{\vec{d}} 
= \left( \mat{I}_N + \sum \limits_{m=1}^{M} \mu_m^{(I)} \  \overline{\mat{A}}_m \right)^{-1} \vec{d} \end{flalign} 
\end{subequations}
\end{algorithmic} 
\end{algorithm}

\vspace{-2mm}

\subsection{Complexity and Latency Analysis}
The dominating online algebraic complexity of {POCS} and ADMM LS-MSP algorithms are in the computation of the $\prox$ operator, which is in the order of $\BigOh\left(N\right)$, in terms of complex multiplications particularly, per given $m$-th discrete frequency and $i$-th iteration. 
The main complexity of SSP is due to the inverse of the sum of $\rank$ 1 matrices, \eg, \eqref{eqn:ssp_Gm_inverse_sum_o_rank1}. 
However, no online matrix inverse is required since one can utilize the Sherman-Morrison formula recursively \cite{Miller1981OnMatrices}, 
which is in the order of $\BigOh\left(MN^2\right)$  per $i$-th iterations. \sk{
Table \ref{table:complexity_comparison} compares the complexity of various SEM constrained precoding schemes. Moreover, depending on the considered hardware architectures of the real-time radios, the latencies to run the algorithm are also suggested for the completeness and deferred to the future work for the detailed analysis}. 

\begin{table} 
\sk{
\begin{center}
\caption{Complexity comparison of various methods per iteration}\label{table:complexity_comparison}
\vspace*{-2mm}
\scalebox{0.5}{
\resizebox{\textwidth}{!}{%
	\begin{tabular}{|l||c|c|} \hline
	\textbf{Method}   & \textbf{Complexity} &  \textbf{Latency}\\ \hline
    MSP \cite{Tom2013MaskShaping}   & $\BigOh\left(N^{4.5}\right)$ \cite{Kumar2016} & -- \\ \hline
    POCS                            & $\BigOh\left(MN\right)$ & High \\ \hline
    ADMM                            & $\BigOh\left(MN\right)$ but $M$ frequencies processing can be parallelized & Medium  \\ \hline
    SSP                            & $\BigOh\left(MN^2\right)$  & Low \\ \hline
	\end{tabular}
}
}
\end{center}
}
\end{table}

\section{Performance Evaluation} \label{sec:simulation_results}
In this section, we evaluate the performance of the proposed algorithms for MSP 
utilizing a 5G NR 
compliant link-level simulator and compare the performances 
with the conventional precoding algorithms, accordingly. 

\noindent \textit{Performance Metrics and Simulation Parameters}:\\
We analyze both in-band and out-of-band (OOB) distortion. 
In this paper, the in-band distortion is quantified in terms of EVM, BLER and (normalized) throughput, 
whereas the OOB distortion is quantified in terms of the operating band unwanted emissions (referred to as SEM) and (conducted) ACLR, see \cite{3GPPTS38.1042018NRReception} for the definitions. 
Specifically, 
\sk{we considered ACLR corresponding to the $1$st adjacent carrier, 
where the minimum requirement is $45$ dB \cite{3GPPTS38.1042018NRReception}}. 
It is worth highlighting that these ACLR requirements are for the complete radio chain, \ie, measurements need to be performed at the antenna connector. 
Thus, spectrum shaping may have some aggressive ACLR and SEM requirements to meet the minimum requirements at the antenna connector.

The key simulation parameters for the physical downlink shared channel (PDSCH) with type-A 
and the investigated test scenarios are summarized in Table \ref{tasim}, \eg, see \cite{3GPPTS38.1042018NRReception}\cite{3GPPTS38.2112018NRModulation} for the detailed NR physical layer. 
We have considered $15$ KHz subcarrier spacing for the NR numerology unless otherwise mentioned. Furthermore, no supporting signals are transmitted besides PDSCH along with the demodulation reference signal (DMRS) for the practical channel and noise variance estimation at the user equipment (UE) side. 
Notice that for simulations purpose, we have considered relatively narrow $5$ MHz channel bandwidth even though the proposed methods can be employed for very large bandwidths.

In addition to the parameters given in Table \ref{tasim}, the discrete frequencies for SEM compliant precoders are selected as $\nu \in \left\{ \mp 5010, \mp 4995, \mp 2565, \mp 2550 \right\}$ KHz, where the negative and positive frequencies correspond to the left and right side of the OOB of the occupied signal spectrum, respectively. 
Notice that these discrete frequencies can be asymmetrically selected for the OOBE suppression. The considered SEM, referred to as SEM 1 is $\vec{\gamma}_{{\rm SEM}\; 1} = \left[ -75, -75, -65, -65 \right]$ $\nicefrac{\rm{dBm}}{100 \ \rm{KHz}}$, corresponding to left/right side of the signal spectrum. Furthermore, we have considered another target SEM 2, \ie, $\vec{\gamma}_{{\rm SEM}\; 2} = \left[ -85, -85, -75, -75 \right]$ $\nicefrac{\rm{dBm}}{100 \ \rm{KHz}}$ for ACLR and EVM performance, particularly.

We found a suitable $\rho = 10$ for consensus ADMM algorithm (cf. Algorithm \ref{alg:consensus_admm_msp}) based on our numerical results, not depicted due to space constraint. For the MSP solution, we have employed CVX wrapper with SeDuMi solver \cite{Grant2014CVX:Beta}. 
\begin{table} 
\begin{center}
\caption{Simulation Parameters for FDD NR (rel-15) PDSCH Type-A}\label{tasim}
\vspace*{-2mm}
\scalebox{0.5}{
\resizebox{\textwidth}{!}{%
	\begin{tabular}{|l||c|c|c|} \hline
	\textbf{Parameters}   & \textbf{Test $1$}   & \textbf{Test $2$}  &   \textbf{Test $3$}    \\ \hline
	\multicolumn{1}{|l||}{Subcarrier Spacing}   & \multicolumn{3}{c|}{15 KHz} \\ \hline
	\multicolumn{1}{|l||}{Carrier BW (PRB alloc.)}      & \multicolumn{3}{c|}{$5$ MHz ($25$ PRBs)}             \\ \hline
	\multicolumn{1}{|l||}{Modulation}    & {16QAM} & {64QAM} & {adaptive (10\% BLER)} \\ \hline 
	\multicolumn{1}{|l||}{Code-rate}     & \multicolumn{2}{c|}{$\nicefrac{1}{2}$  $\nicefrac{5}{6}$} &  {adaptive (10\% BLER)}  \\ \hline
	Channel Model       & \multicolumn{2}{c|}{TDL-A ($30$ns, $10$Hz) }  & TDL-A ($300$ns, $100$Hz)   \\ \hline
	\multicolumn{1}{|l||}{Channel \& Noise power}     & \multicolumn{3}{c|}{Practical LMMSE based} \\ \hline
	\multicolumn{1}{|l||}{HARQ max transmissions}     & \multicolumn{3}{c|}{4 (3 max retransmissions with rv $\{0,2,3,1\}$)}  \\ \hline
    \multicolumn{1}{|l||}{Other Information}         		 & \multicolumn{3}{c|}{LDPC encoder \& decoder; no other impairments}   \\ \hline	
	\end{tabular}
	}
}
\end{center}
\end{table}

\noindent \textit{Simulation Results}:\\
In Fig. \ref{fig:aclr_vs_iter__all_methods__sel_iter}, we present the OOB performance in terms of ACLR against iterations corresponding to test 3 in Table \ref{tasim} (similar result for other tests). Evidently, SSP converges in $3$ iterations for both target masks to achieve the same ACLR results as rendered by MSP. The ADMM and POCS require nearly $100$ iterations for relaxed SEM 1, while for aggressive SEM 2, ADMM and POCS require $800$ and $3000$ iterations, respectively. We can observe that for the first $10$/$100$ iterations, POCS has faster convergence compared to ADMM, which could be due to initializations. On the other hand, we have observed that CVX achieves the MSP solution in nearly $25$ iterations, but we could not manage to obtain the result for every iteration. We have also observed similar behaviour with respect to EVM against iterations. Based on these results, we now fix the iterations for POCS, ADMM, and SSP as $3000$, $800$, and $3$, respectively, for the subsequent results. Furthermore, we have numerically observed that the precoders have a negligible impact on the peak to average power ratio. 

\begin{figure*}[!htbp]
\begin{multicols}{3}
     \centering \includegraphics[width=\linewidth]{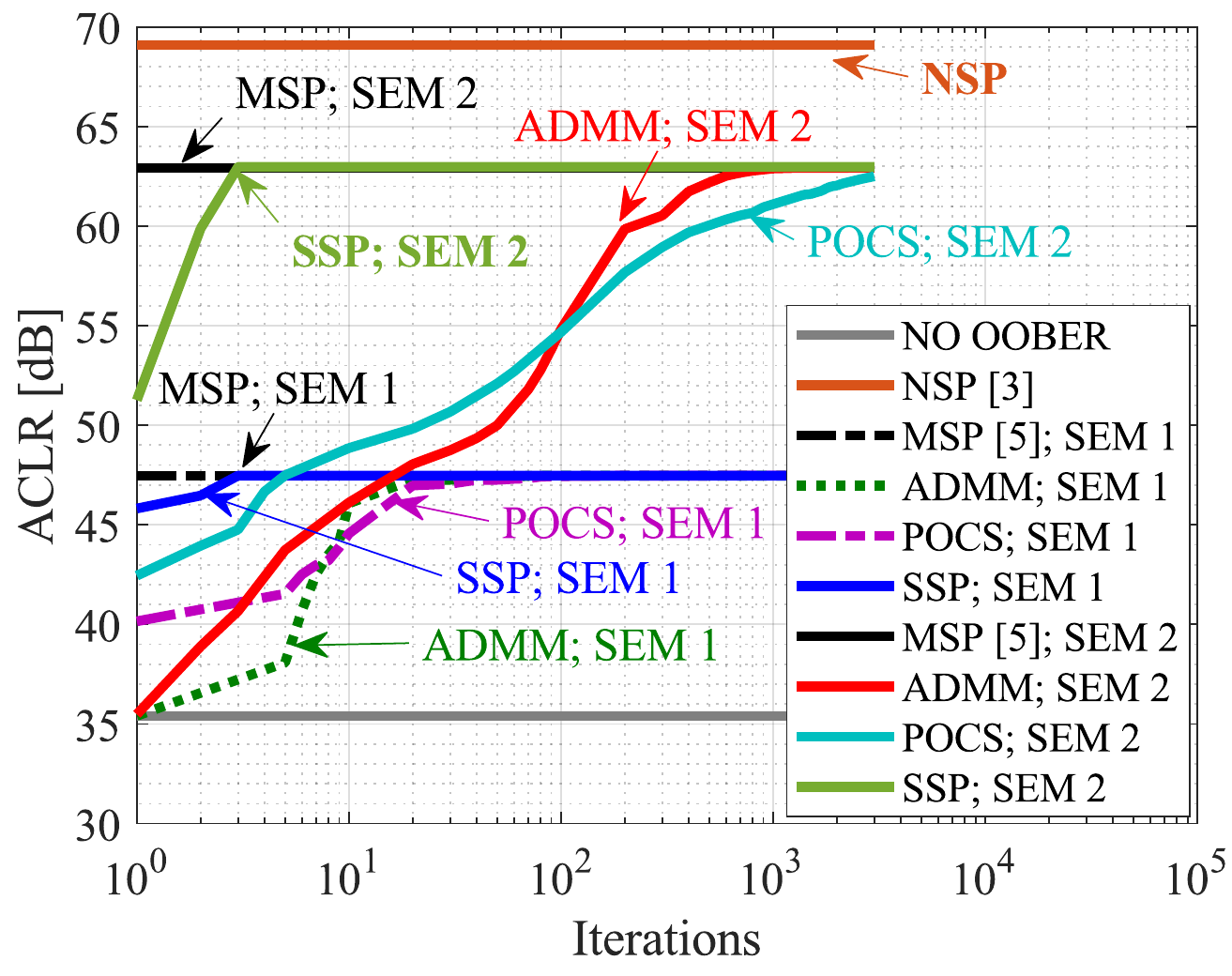}\par\caption{ACLR-vs.-Iterations along with two target masks, SEM 1 and SEM 2. \label{fig:aclr_vs_iter__all_methods__sel_iter}}
    \centering \includegraphics[width=\linewidth]{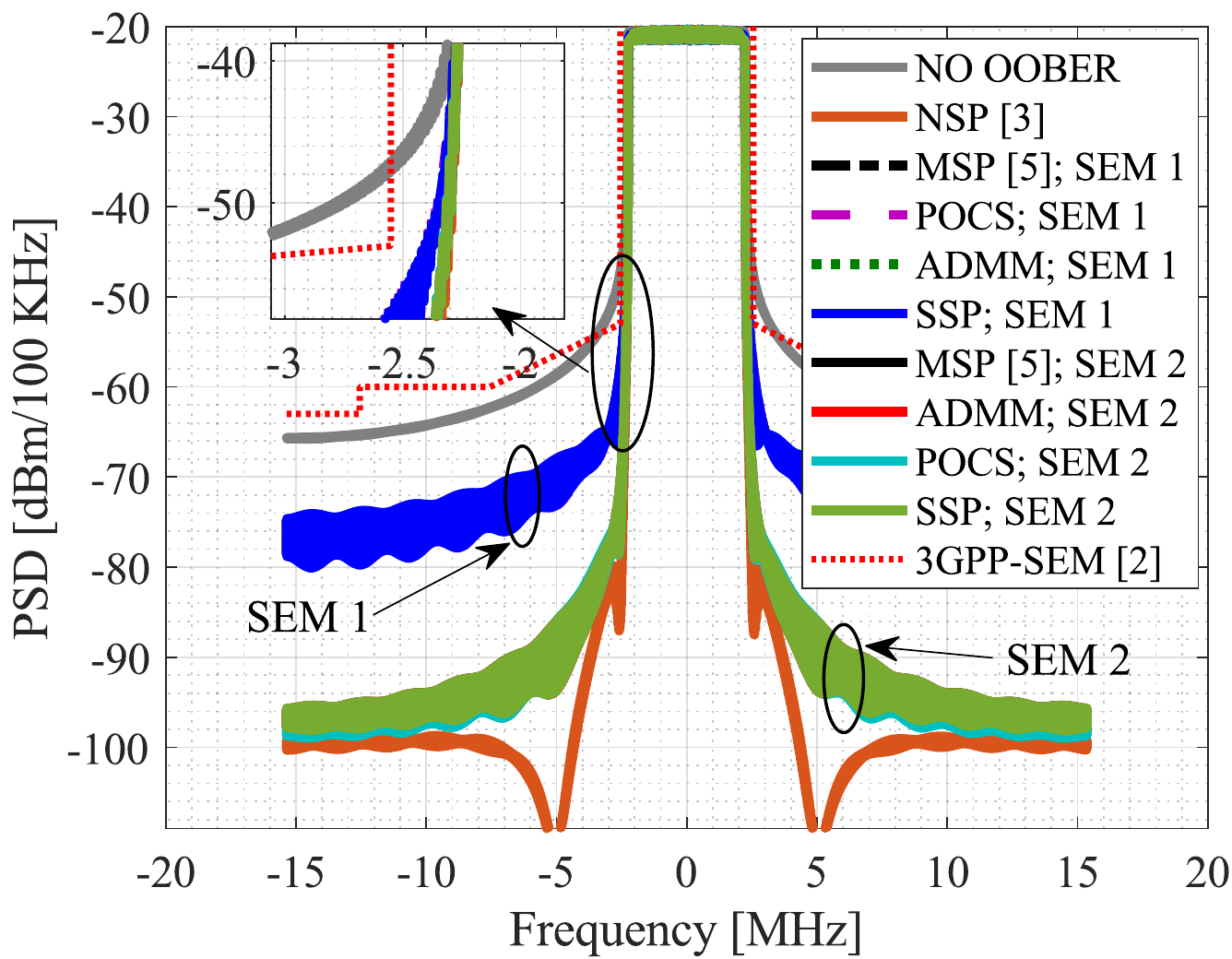}\par\caption{PSD of the various waveforms with two different target SEMs.   \label{fig:psd__all_methods__sel_iter}}
    \centering \includegraphics[width=\linewidth]{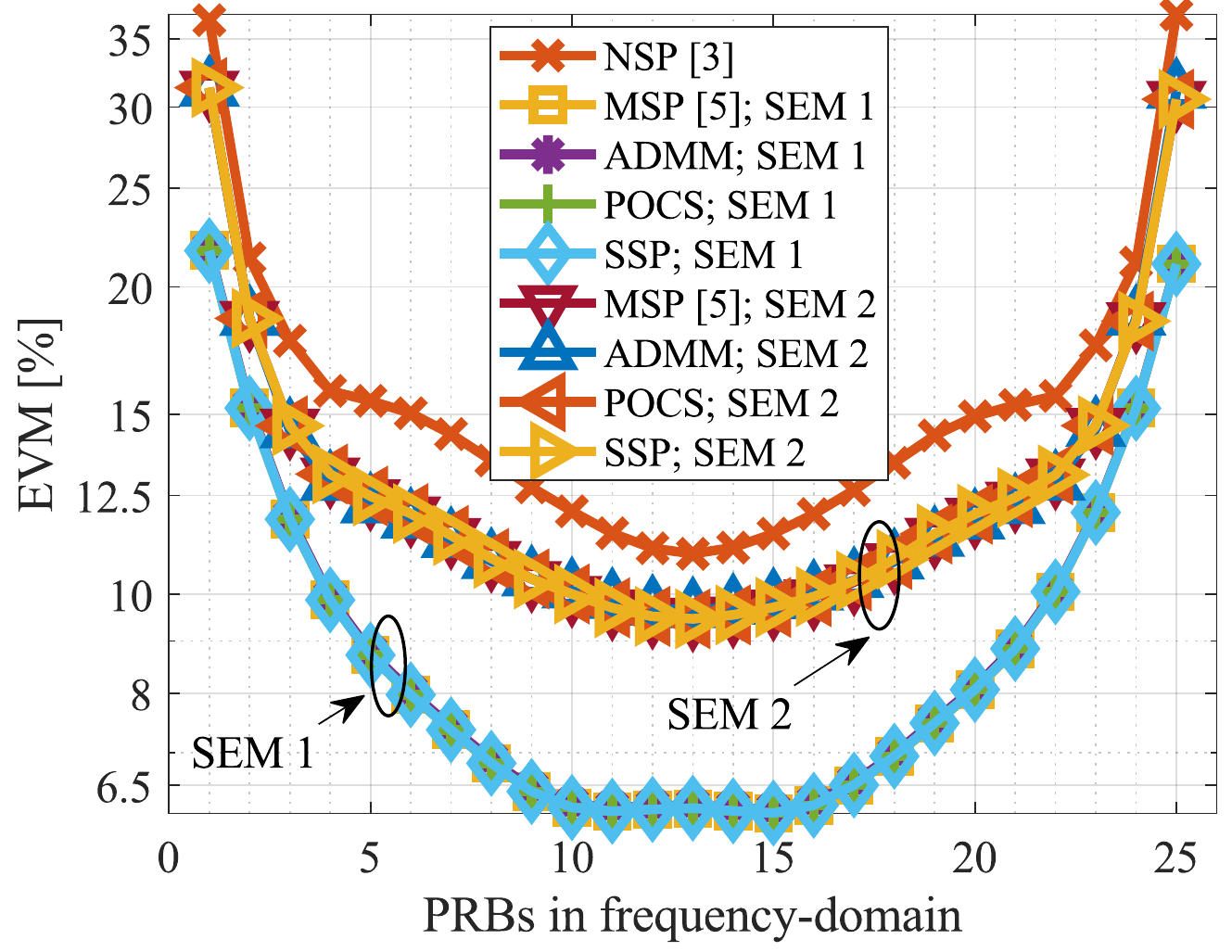}\par\caption{EVM [\%] distribution over PRBs in frequency-domain (log y-axis). \label{fig:evm_vs_prbs}}
\end{multicols}
\end{figure*}

Fig. \ref{fig:psd__all_methods__sel_iter} show the average power spectral density (PSD) versus frequency for the proposed algorithms. Moreover, the NR SEM corresponding to a medium range BS with the maximum output of $38$ dBm \cite{3GPPTS38.1042018NRReception} is also shown for the completeness, but the SEM is normalized according to the normalized transmit signal power such that the transmit signal PSD is approximately $-21.5 \; {\rm dBm/}100\; {\rm KHz}$, in the link simulations.

Fig. \ref{fig:evm_vs_prbs} present the average EVM distribution per physical resource block (PRB) \cite{3GPPTS38.2112018NRModulation} for test 3. The edge PRBs have relatively high distortion power compared to the central PRBs. It is worth to highlight that the EVM cannot be controlled in NSP unlike in the mask compliant precoders. 

\begin{figure*}[!htbp]
\begin{multicols}{3}
    \centering \includegraphics[width=\linewidth]{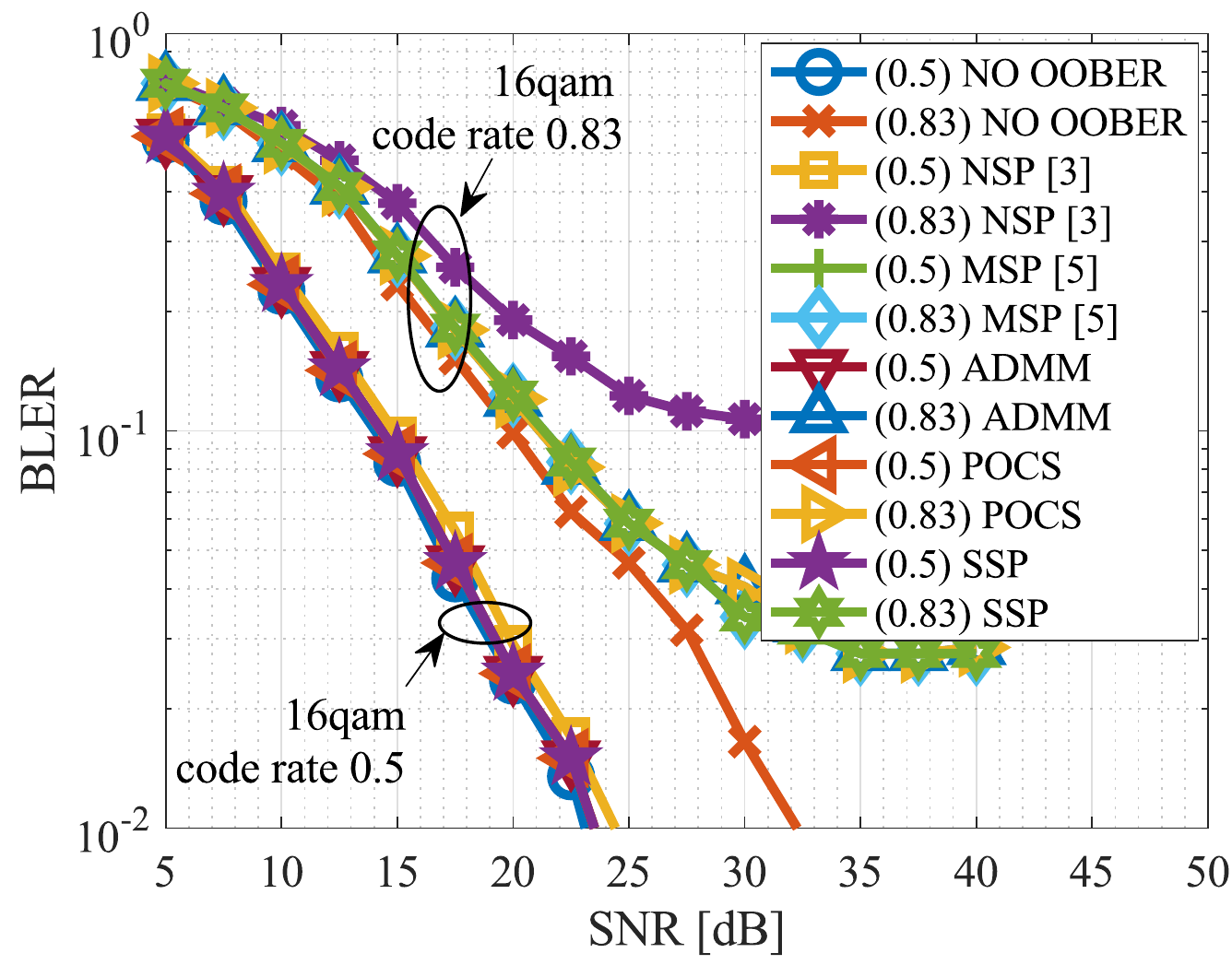}\par\caption{BLER-vs.-SNR for {16QAM}---cf. Test 1 in Table \ref{tasim}.\label{fig:bler_vs_snr__qam16__tdl_a_30ns_10hz}}
    \centering \includegraphics[width=\linewidth]{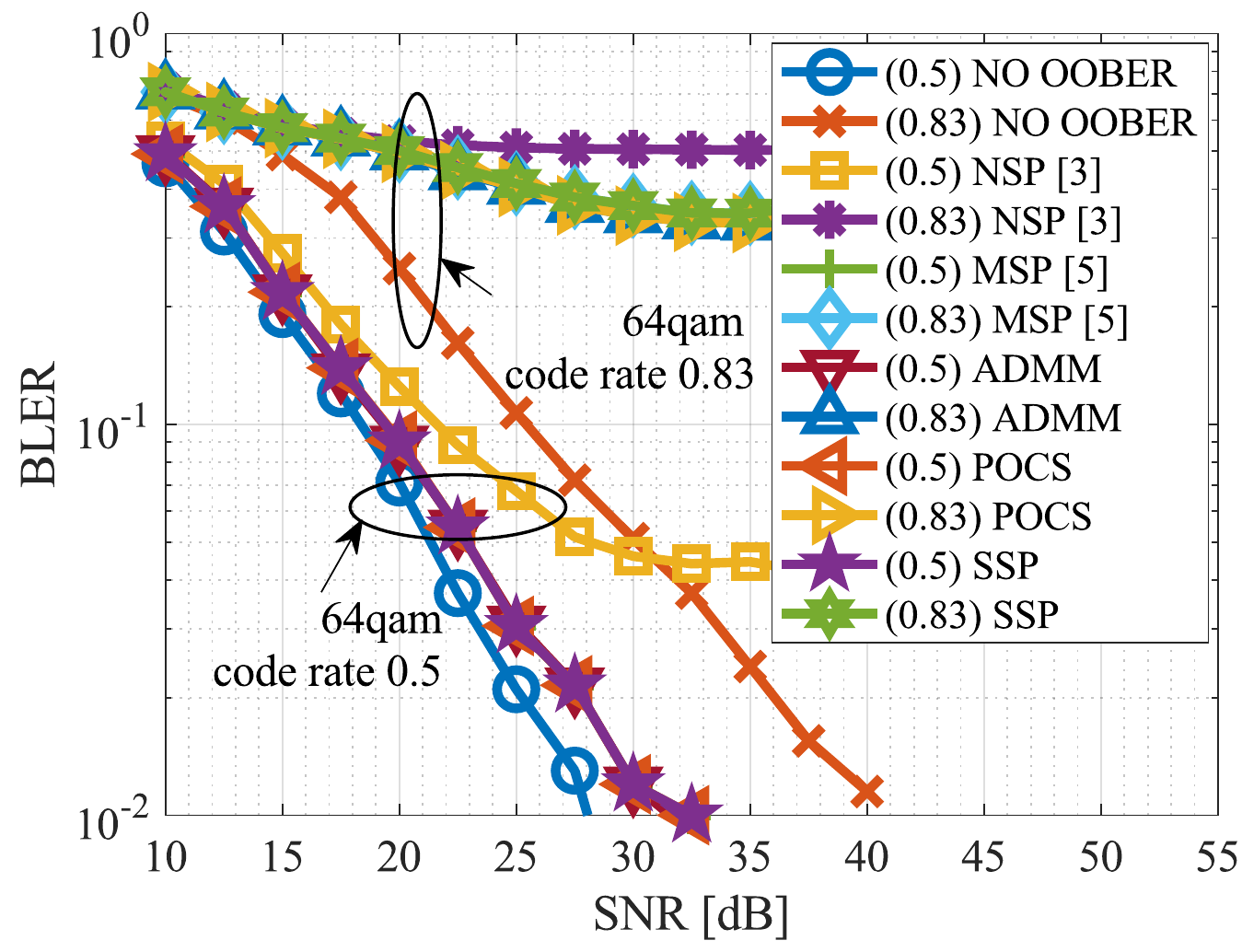}\par\caption{BLER-vs.-SNR for {64QAM}---cf. Test 2 in Table \ref{tasim}.\label{fig:bler_vs_snr__qam64__tdl_a_30ns_10hz}}
    \centering \includegraphics[width=\linewidth]{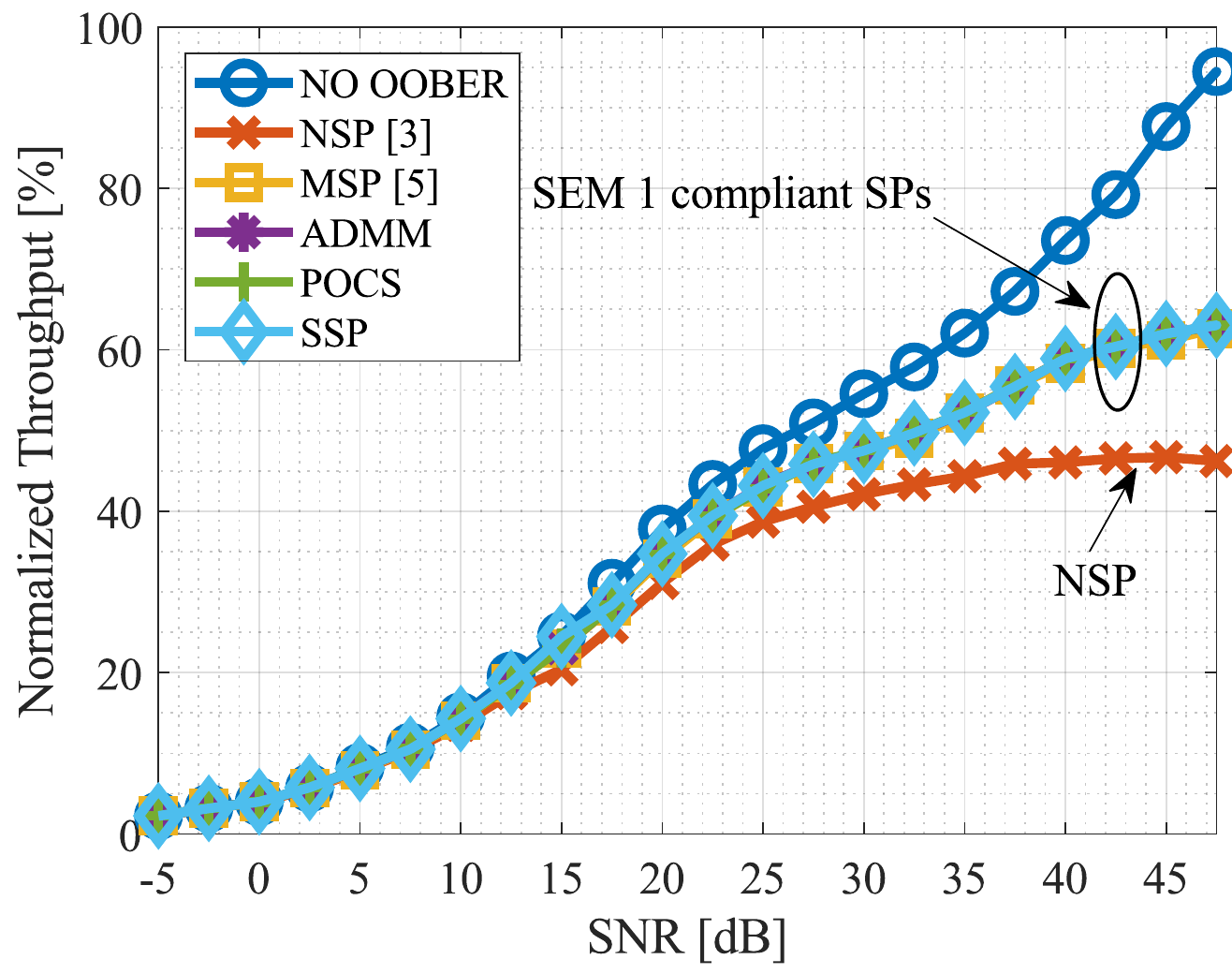}\par\caption{Normalized Throughput-vs.-SNR with link adaptation---cf. Test 3 in Table \ref{tasim}.\label{fig:tp_vs_snr__la__tdl_a_300ns_100hz}}
\end{multicols}
\vspace*{-3mm}
\end{figure*}

The (in-band) performance, in terms of BLER versus received signal-to-noise ratio (SNR) are presented in Fig. \ref{fig:bler_vs_snr__qam16__tdl_a_30ns_10hz} and Fig. \ref{fig:bler_vs_snr__qam64__tdl_a_30ns_10hz} for the fixed reference channels, \ie, fixed modulation alphabet and code rates, with 16QAM and 64QAM, respectively. For SEM compliant precoders and code rate $\nicefrac{1}{2}$ with 16QAM and 64QAM modulation alphabets, the SNR loss at 10\% BLER  is negligible and nearly $0.85$ dB, respectively, compared to no OOBE reduction scheme. In case of NSP with 64QAM $\nicefrac{1}{2}$, the SNR loss at BLER 10\% is nearly $3$ dB while error-floor has been observed around BLER 5\%. On the other hand, for SEM constrained precoders with {16QAM} $\nicefrac{5}{6}$, the SNR loss is approximately $1$ dB, but error-floor can be seen around BLER 3\%. For 64QAM $\nicefrac{5}{6}$, all the precoders render poor performance due to high transmit EVM. Fig. \ref{fig:tp_vs_snr__la__tdl_a_300ns_100hz} present the Throughput vs. SNR with link adaptation having a target of 10\% BLER. Although SEM-constrained precoders show better performance compared to NSP, there can be a potential to improve the performance of spectral precoding, notably in high SNR regime, which will be addressed in our future work.

\section{Conclusion} \label{sec:conclusion_future_work}
We presented three hardware-friendly algorithms for spectral emission mask constrained precoder
for the large-scale OFDM based systems,
namely ADMM, POCS, and 
SSP utilizing KKT conditions and a coordinate descent scheme. 
Numerical results corroborate that the proposed algorithms achieve similar performance as existing high complexity MSP \cite{Tom2013MaskShaping} utilizing generic convex optimization solvers. Furthermore, our simulation results show that SSP converges in 
typically 3 iterations with some increased computational cost yet offers lower computational cost 
than the conventional MSP.

\bibliographystyle{ieeetr}
\bibliography{ref}

\begin{thebibliography}{10}

\bibitem{Dahlman5GNR2018}
E.~Dahlman, S.~Parkvall, and J.~Sk\"{o}̈ld, {\em {5G NR: The Next Generation
  Wireless Access Technology}}.
\newblock Academic Press, 2018.

\bibitem{3GPPTS38.1042018NRReception}
{3GPP TS 38.104, v15.3.0}, ``{NR; Base Station (BS) radio transmission and
  reception},'' 2018.

\bibitem{DeBeek2009SculptingPrecoder}
J.~van~de Beek, ``{Sculpting the multicarrier spectrum: A novel projection
  precoder},'' {\em IEEE Commun. Lett.}, vol.~13, no.~12, pp.~881--883, 2009.

\bibitem{Kumar2015WeightedRadio}
R.~Kumar and A.~Tyagi, ``{Weighted Least Squares Based Spectral Precoder for
  OFDM Cognitive Radio},'' {\em IEEE Wireless Commun. Lett.}, vol.~4, no.~6,
  pp.~641--644, 2015.

\bibitem{Tom2013MaskShaping}
A.~Tom, A.~{S}ahin, and H.~Arslan, ``{Mask compliant precoder for OFDM spectrum
  shaping},'' {\em IEEE Commun. Lett.}, vol.~17, no.~3, pp.~447--450, 2013.

\bibitem{Boyd2004ConvexOptimization}
S.~P. Boyd and L.~Vandenberghe, {\em {Convex optimization}}.
\newblock Cambridge University Press, 2004.

\bibitem{Kumar2016}
R.~Kumar and A.~Tyagi, ``{Computationally Efficient Mask-Compliant Spectral
  Precoder for OFDM Cognitive Radio},'' {\em IEEE Trans. on Cognitive Commun.
  and Networking}, vol.~2, no.~1, pp.~15--23, 2016.

\bibitem{Combettes2011}
P.~L. Combettes and J.-C. Pesquet, ``{Proximal Splitting Methods in Signal
  Processing},'' pp.~185--212, Springer, New York, NY, 2011.

\bibitem{Parikh2013}
N.~Parikh and S.~Boyd, ``{Proximal Algorithms},'' {\em Foundations and Trends
  in Optimization}, vol.~1, no.~3, pp.~123--231, 2013.

\bibitem{Boyd2010}
S.~Boyd, N.~Parikh, E.~Chu, B.~Peleato, and J.~Eckstein, ``{Distributed
  Optimization and Statistical Learning via the Alternating Direction Method of
  Multipliers},'' {\em Foundations and Trends in Machine Learning}, vol.~3,
  no.~1, pp.~1--122, 2010.

\bibitem{Luo1992}
Z.~Q. Luo and P.~Tseng, ``{On the convergence of the coordinate descent method
  for convex differentiable minimization},'' {\em Journal of Optimization
  Theory and Applications}, vol.~72, pp.~7--35, Jan 1992.

\bibitem{Beck2017}
A.~Beck, {\em {First-order methods in optimization}}.
\newblock SIAM, 2017.

\bibitem{Huang2016Consensus-ADMMProgramming}
K.~Huang and N.~D. Sidiropoulos, ``{Consensus-ADMM for General Quadratically
  Constrained Quadratic Programming},'' {\em IEEE Trans. on Signal Processing},
  vol.~64, no.~20, pp.~5297--5310, 2016.

\bibitem{Miller1981OnMatrices}
K.~S. Miller, ``{On the Inverse of the Sum of Matrices},'' {\em Mathematics
  Magazine}, vol.~54, no.~2, pp.~67--72, 1981.

\bibitem{3GPPTS38.2112018NRModulation}
{3GPP TS 38.211, v15.3.0}, ``{NR; Physical channels and modulation},'' 2018.

\bibitem{Grant2014CVX:Beta}
M.~Grant and S.~Boyd, ``{CVX: Matlab software for disciplined convex
  programming, version 2.0 beta},'' 2014.

\end{thebibliography}


\begin{thebibliography}{99}
\bibitem{a_tom_etal__mcp__comm_papers_mar2013}
A. Tom, A. Sahin, and H. Arslan, "Mask compliant precoder for OFDM spectrum shaping," \textit{IEEE Commun. Lett.}, vol. 17, no. 3, pp. 447-450, Mar. 2013.
\bibitem{parikh_and_boyd__prox_algs__2013}
N. Parikh and S. Boyd, "Proximal algorithms," \textit{Found. Trends Optim.}, vol. 1, no. 3, pp. 123-231, 2013.
\bibitem{jaap__novel_proj_prec__comm_papers_dec2009}
J. van de Beek, "Sculpting the multicarrier spectrum: A novel projection precoder," \textit{IEEE Commun. Lett.}, vol. 13, no. 12, pp. 881-883, Dec. 2009.
\bibitem{r_kumar_and_a_tyagi__wlcp__comm_papers_dec2015}
R. Kumar and A. Tyagi, "Weighted least squares based spectral precoder for OFDM cognitive radio," \textit{IEEE Commun. Lett.}, vol. 4, no. 6, pp. 641-644, Dec. 2015.
\bibitem{TS38.104}
3GPP TS.38.104, "Base Station (BS) radio transmission and reception," v.15.0.0, Dec. 2017.
\bibitem{r_a_pitaval_etal__scfdma_spectral_precoder__trans_on_comm_nov2017}
R.-A. Pitaval, B. M. Popovi\'c, and J. van de Beek, "$N$-Continuous SC-FDMA and Its Polarized Transmission and Reception," \textit{IEEE Trans. Comm.}, vol. 65, no. 11, pp. 4911-4925, Nov. 2017.



\end{thebibliography}

\end{document}